\begin{document}
\doi{10.1080/0003681YYxxxxxxxx}
 \issn{1563-504X}
\issnp{0003-6811}
\jvol{00} \jnum{00} \jyear{2008} \jmonth{January}

\markboth{A. Ya. Olenko \& T. K. Pog\'any}{UNIVERSAL TRUNCATION ERROR UPPER BOUNDS IN SAMPLING RESTORATION}

\title{UNIVERSAL TRUNCATION ERROR UPPER BOUNDS IN IRREGULAR SAMPLING RESTORATION}

\author{Andriy Ya. Olenko $^{\rm a}$, Tibor K. Pog\'any $^{\rm b,}$$^\star$\thanks{$^\star$ Corresponding author.
E--mail: \texttt{poganj@pfri.hr}\vspace{6pt}}\\
$^{\rm a}$ {\it Department of Mathematics and Statistics, LaTrobe University,  Victoria, 3086, Australia;
{\tt a.olenko@latrobe.edu.au}\\
$^{\rm b}$ Faculty of Maritime Studies, University of Rijeka, HR-51000 Rijeka, Studentska 2, Croatia;
{\tt poganj@pfri.hr}}. \\\vspace{6mm}
\received{ This is an Author's Accepted Manuscript of an article published in the
Applicable Analysis, Vol.~90, No. 3-4. (2011), 595--608.  [copyright Taylor \& Francis], available online at: http://www.tandfonline.com/ [DOI:10.1080/00036810903437754]}
}

\maketitle

\begin{abstract}
Universal (pointwise uniform and time shifted) truncation error upper bounds are presented in
Whittaker--Kotel'nikov--Shannon (WKS) sampling restoration sum for
Bernstein  function class $B_{\pi,d}^q\,,\ q \ge 1,$ $d\in \mathbb N\,,$
when the sampled functions decay rate is unknown. The case of
multidimensional irregular sampling is discussed.
\bigskip

\begin{keywords} Whittaker--Kotel'nikov--Shannon sampling restoration formula;
approximation/interpolation error level; Nikolski\u{\i}--Placherel--P\'olya inequality;
truncation error upper bound; irregular sampling; multidimensional sampling
\end{keywords}
\begin{classcode} {Primary 94A20, 26D15; Secondary 30D15, 41A05}\end{classcode}
\bigskip
\end{abstract}

\section{Introduction}
The classical WKS sampling theorem has been extended to the case of nonuniform sampling by numerous authors.
For detailed information on the theory and its numerous applications, we refer to the book \cite{mar}.

Most known irregular sampling results deal with Paley--Wiener functions which have $L^2(\mathbb R)$ restrictions
on the real line. It seems that the best known nonuniform WKS
sampling results for entire functions in $L^p$--spaces were given in \cite{hin, hin1}. There are no
multidimensional $L^p$--WKS sampling theorems with precise truncation error estimates in open literature.
However, explicit truncation error upper bounds in multidimensional WKS reconstructions are of
great importance in signal and image processing.

Alternative reconstruction approaches for irregular sampling problems were developed in \cite{fei1,fei2,fei3}. However,
due to long traditions the WKS type reconstructions still play key role in applied signal and image processing.
New numerical methods for quick sinc--series summations (see, for example, \cite{gre}) let more efficient
usage of WKS formulae than before. On the other hand WKS type results
are important not only because of signal processing applications. WKS theorems are equivalent to many key results
in mathematics, see, e.g. \cite{hig2}. Therefore they are also valuable for theoretical studies.
It explains why in spite of substantial progress in modern approximation methods WKS type reconstructions
are still of great importance and vast amount of new refine results are published regularly.

In this paper we use approaches and methods developed in \cite{FLS,OP,OP1,OP2,uni_1} to
investigate multidimensional irregular sampling in $L^p$-spaces.

Let $\mathsf X$ be a normed space and assume that the structure of
$\mathsf X$ admits the sampling approximation procedure
  \begin{equation}\label{1}
      f(\mathbf{x}) = \sum_{\mathbf{n}\in \mathbb Z^d} f(t_\mathbf{n})S(\mathbf{x},t_\mathbf{n}) \qquad
      \big(f\in \mathsf X\big)
  \end{equation}
where $\mathbf{x} \in \mathbb R^d$; $\mathfrak T :=
\{t_\mathbf{n}\}_{\mathbf{n} \in \mathbb Z^d} \subset \mathbb R^d$
and $S(\cdot,\cdot)$ are the sampling set and the sampling function
respectively. This formula is one of the basic tools in signal
processing.

In direct numerical implementations we consider the truncated
variant of (\ref{1}), reads as follows:
   \[ Y_{\mathfrak J}(f;\mathbf{x}) = \sum_{\mathbf{n}\in \mathfrak J} f(t_\mathbf{n})S(\mathbf{x},t_\mathbf{n}),
                                      \qquad \big(\mathfrak J \subset \mathbb Z^d\big), \]
where the index set $\mathfrak J$ is finite by application reasons.
Namely, recovering a continuous signal from discrete samples or
assessing the information lost in the sampling process are the
fundamental problems in sampling and interpolation theory.

The usual procedure is to estimate the truncation error
   \begin{equation} \label{3}
      \| T_{\mathfrak J}(f;\mathbf{x})\| :=  \|f(\mathbf{x}) -
          Y_{\mathfrak J}(f;\mathbf{x})\| \le \varphi_{\mathfrak J}(f;\mathbf{x}),
   \end{equation}
where $\| \cdot\|$ could stay for different norms on different sides
of this relation and $\varphi_{\mathfrak J}$ denotes the truncation
error upper bound. Simple truncation error upper bounds are the main
tools in numerical implementations, when they do not contain
infinite products and/or unknown function values. However, the
suitably sharp truncation error upper bound enables pointwise,
almost sure, uniform {\em etc.} type convergence of the approximating
sequence $Y_{\mathfrak J}(f;\mathbf{x})$ to initial $f \in \mathsf X$. The main aim of this
paper is for $f\in L^p$ to discuss
$\varphi_{\mathfrak J}(f; \mathbf x)$ by getting pointwise upper
bounds valid for all $\mathbf x$ on the whole range of the signal
domain without strong assumptions on the decay rate of $f,$ see \cite{OP2}. This kind of upper bounds
we call {\em universal}.

The most frequent rearrangement of \eqref{3} in the literature is of the form
   \begin{equation} \label{4}
      \| T_{\mathfrak J}(f;\mathbf{x})\|_\infty \le
         \Big( \sum_{\mathbf n \in \mathbb Z^d\setminus \mathfrak J}|S(\mathbf x,t_{\mathbf n})|^p\Big)^{1/p}
         \Big( \sum_{\mathbf n \in \mathbb Z^d\setminus \mathfrak J}|f(t_{\mathbf n})|^q\Big)^{1/q}
         =: A_p\,B_q\, ,
   \end{equation}
$p, q$ being a conjugated H\"older pair, i.e. $1/p+1/q=1$.

To make the approximant $Y_{\mathfrak J}(f;\mathbf{x})$ more precise
let us assume $\mathfrak J = \mathfrak J_\mathbf x$, that is, that the sampling index set
$\mathfrak J_{\mathbf x}$ depends on the location of time parameter $\mathbf x$ with
respect to the behaviour of $f$ in estimating $B_q,$ see  \cite{FLS, OP1, Yen}. That means
$T_{\mathfrak J}(f;\mathbf{x})$ is {\em a fortiori} time shifted and possesses time adapted sampling size.

To obtain a class of truncation error upper bounds when the decay
rate of the initial signal function is not known one operates with
the straightforward $B_q \le C_{f,\mathfrak T}\,\|f\|_q$ where
$C_{f, \mathfrak T}$ is suitable absolute constant. Thus \eqref{4} becomes
   \[ \| T_{\mathfrak J}(f;\mathbf{x})\|_\infty \le C_{f,\mathfrak T}\,A_p\, \|f\|_q. \]
We are interested in estimates for $A_p$ such that vanish with
$|\mathfrak J_{\mathbf x}| \to \infty .$ Therefore, the obtained upper bounds really will be universal for wide
classes of $f(\mathbf x)$ and $\mathfrak T$.

\section{Multidimensional Plancherel--P\'olya inequality.}
Denote $\| \cdot \|_p$ the $L_p$-norm in finite case (while $\| \cdot\|_\infty \equiv {\rm sup\,\, vrai}|\cdot |)$
and $L^p(\mathbb R)$ the set of all functions whose restrictions to $\mathbb R$ have finite $L_p$-norm.
Assume $f \in L^r(\mathbb R),\, r>0$ be of exponential type $\sigma>0$ and let $\{ t_n\}_{n\in \mathbb Z}$
be a separated real sequence, i.e. such that $\inf_{n\neq m}|t_n-t_m|\ge \delta>0\,.$ Then it holds true
\cite[Eq.~(76)]{PPII}
   \begin{equation} \label{PP}
      \sum_{n \in \mathbb Z} |f(t_n)|^r \le B \|f\|_r^r \, ,
   \end{equation}
where
   \[ \displaystyle B = \frac{8(e^{r\,\delta \sigma /2} -1)}{ r\, \pi \sigma\delta^2 } \, .\]
The display (\ref{PP}) exposes the celebrated Plancherel--P\'olya
inequality, see \cite{Boas, PP, PPII}. It could be mentioned that Boas \cite{Boas0} has been established
another estimate in one--dimensional case under different assumptions on $\mathfrak T$, and
recently Lindner publishes an estimate in onedimensional case when $r = 2,$ \cite{lin}.

Here, we will give the multidimensional analog of the Plancherel--P\'olya inequality. Hereinafter $B_{\boldsymbol
\sigma,d}^r,\, r>0$ denotes the Bernstein class \cite{nik} of
$d$--variable entire functions of exponential type at most $\boldsymbol \sigma = (\sigma_1, \cdots, \sigma_d)$
coordinatewise whose restriction to $\mathbb R^d$ is in $L^r(\mathbb R^d)$.
\begin{theorem}\label{th1} Let $\mathfrak T = \{ t_{\mathbf n}\}_{\mathbf n \in \mathbb Z^d},\,
t_{\mathbf n}=(t_{n_1},...t_{n_d})$ be real  separated sequence, i.e.
$\inf_{n_\ell \neq m_\ell}|t_{n_\ell}-t_{m_\ell}|\ge \delta_\ell>0,\, \ell = \overline{1,d}$. Let
$f \in B_{\boldsymbol \sigma,d}^r,\, r\ge 1$. Then
   \[ \sum_{\mathbf n \in \mathbb Z^d} |f(t_{\mathbf n})|^r \le \mathfrak B_{d,r} \|f\|_r^r \, ,\]
where
   \[ \displaystyle \mathfrak B_{d,r} = \Big( \frac{8}{r \pi}\Big)^d \prod_{\ell=1}^d
                    \frac{e^{r\,\delta_\ell \sigma_\ell /2} -1}{\sigma_\ell \delta_\ell^2} \, . \]
\end{theorem}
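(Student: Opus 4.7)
The plan is a dimension-by-dimension iteration of the one-dimensional Plancherel--Pólya inequality \eqref{PP}, combined with Tonelli's theorem to interchange sums and integrals at each stage. Since the asserted constant
\[
\mathfrak B_{d,r} = \Big(\frac{8}{r\pi}\Big)^d \prod_{\ell=1}^d \frac{e^{r\delta_\ell\sigma_\ell/2}-1}{\sigma_\ell\delta_\ell^2}
\]
factors as the product $\prod_{\ell=1}^d B_\ell$ of the one-dimensional constants $B_\ell = 8(e^{r\delta_\ell\sigma_\ell/2}-1)/(r\pi\sigma_\ell\delta_\ell^2)$ produced by \eqref{PP} with parameters $(\sigma_\ell,\delta_\ell)$, this product structure itself dictates the natural one-variable-at-a-time argument.

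Concretely, I would write $\sum_{\mathbf n \in \mathbb Z^d} = \sum_{n_d}\cdots\sum_{n_1}$ and, with $(t_{n_2},\dots,t_{n_d})$ held fixed, consider the slice $g_1(x_1) := f(x_1,t_{n_2},\dots,t_{n_d})$. It is entire of exponential type at most $\sigma_1$ in $x_1$ and, by the regularity remarks below, lies in $L^r(\mathbb R)$, so \eqref{PP} yields
\[
\sum_{n_1 \in \mathbb Z} |f(t_{n_1},t_{n_2},\dots,t_{n_d})|^r \le B_1 \int_{\mathbb R} |f(x_1,t_{n_2},\dots,t_{n_d})|^r \, dx_1.
\]
Summing in $n_2$ and using Tonelli (nonnegative integrands) to move the sum inside the $x_1$-integral, I then apply \eqref{PP} to the slice $x_2 \mapsto f(x_1,x_2,t_{n_3},\dots,t_{n_d})$, gaining the factor $B_2$ and a second integral. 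Iterating through all $d$ coordinates collects the factors $B_1 \cdots B_d = \mathfrak B_{d,r}$ in front of $\int_{\mathbb R^d}|f(\mathbf x)|^r\,d\mathbf x = \|f\|_r^r$, which is exactly the claim.

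The principal technical obstacle is justifying, at each iteration step, that the free-variable slice obtained after freezing the remaining coordinates at sample points belongs to the appropriate one-dimensional Bernstein class $B^r_{\sigma_\ell,1}$. Entirety in the free variable and the exponential-type bound $\sigma_\ell$ are immediate from the definition of $B^r_{\boldsymbol\sigma,d}$, so the only nontrivial content is finiteness of the $L^r(\mathbb R)$ norm of the slice. A priori, membership $f|_{\mathbb R^d} \in L^r(\mathbb R^d)$ gives this only for almost every frozen tuple, but Bernstein--Nikolskii-type inequalities for entire functions of exponential type (see \cite{nik}) upgrade the almost-everywhere statement to \emph{every} tuple of real frozen coordinates, since such slices are in fact continuous with controlled $L^r$ norms. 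With that regularity established the iteration above runs without further difficulty, and the product of the one-dimensional constants equals precisely $\mathfrak B_{d,r}$.
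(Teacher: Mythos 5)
Your proposal is correct and follows essentially the same route as the paper's proof: a coordinatewise iteration of the one-dimensional Plancherel--P\'olya inequality \eqref{PP}, with Tonelli/Fubini to interchange sums and integrals, and Nikolskii's inequality of different dimensions \cite[\S 3.4.2]{nik} to place each one-variable slice in the appropriate class $B^r_{\sigma_\ell,1}$. Your extra care about upgrading the almost-everywhere $L^r$ membership of slices to every frozen tuple is a welcome refinement of a point the paper passes over by citing \cite{nik} directly.
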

\begin{proof} Take $d=2$; the proof will be identical in the case $d>2$.

From the assumption $f \in B_{\boldsymbol \sigma,d}^r$ and the inequality of different dimensions
\cite[\S 3.4.2]{nik} which holds for $r\ge 1$ follow that $f(\cdot, x_2)\in B_{\sigma_1,1}^r$ and
$f(x_1, \cdot)\in B_{\sigma_2,1}^r.$ Therefore we can apply (\ref{PP}) coordinatewise.
Because $\{t_{n_\ell}\}_{n_\ell\in \mathbb Z}$ are separated with $\delta_\ell,\, \ell=1,2$ there holds
   \begin{align}
      \sum_{\mathbf n \in \mathbb Z^2} |f(t_{\mathbf n})|^r &=
              \sum_{n_1 \in \mathbb Z} \Bigg( \sum_{n_2 \in \mathbb Z}| f(t_{n_1},t_{n_2})|^r \Bigg)
          \le \frac{8(e^{r\,\delta_2 \sigma_2 /2} -1)}{ r\, \pi\sigma_2 \delta_2^2 }
              \sum_{n_1 \in \mathbb Z} \int_{\mathbb R}|f(t_{n_1},x_2)|^r\, dx_2\, \nonumber \\  \label{delta2}
         &= \frac{8(e^{r\,\delta_2 \sigma_2 /2} -1)}{ r\, \pi\sigma_2 \delta_2^2 }
              \int_{\mathbb R} \Bigg( \sum_{n_1 \in \mathbb Z} |f(t_{n_1},x_2)|^r\Bigg) dx_2 \, .
   \end{align}
Second subsequent application of (\ref{PP}) to (\ref{delta2}) yields
   \begin{equation} \label{delta1}
      \sum_{\mathbf n \in \mathbb Z^2} |f(t_{\mathbf n})|^r \le
            \frac{64(e^{r\,\delta_1 \sigma_1 /2} -1)(e^{r\,\delta_2 \sigma_2 /2} -1)}
                 {\sigma_1\sigma_2\, r^2\, \pi^2 (\delta_1\delta_2)^2 }
                 \int_{\mathbb R^2} |f(x_1,x_2)|^r dx_1 dx_2 = \mathfrak B_{2,r} \|f\|_r^r\, ,
   \end{equation}
where in (\ref{delta1}) the Fubini's theorem is used. Therefore, the assertion  of Theorem is proved.
\end{proof}

\section{Multidimensional irregular sampling}
The case of WKS sampling restoration by nonequidistantly spaced nodes
$\mathfrak T  = \{t_{\mathbf n} = \mathbf n + h_{\mathbf n} \}$ is the so--called
{\em irregular sampling procedure}. In this case there is a doubt on the reality of
analogue of regular sampling results on sharp estimates for truncation error upper
bounds in uniform sampling (for example, see \cite{uni_1}), since $\mathfrak h := \{h_{\mathbf n}\}$ is
wandering together with $\mathbf x$, and therefore different time values $\mathbf x$
generate windows with substantially different sampling sets. Of course, this results in increasing
magnitude of the truncation error \cite{Pog3}.

Let $\mathbf N:= (N_1, \cdots, N_d)\in \mathbb N^d,\,\mathfrak J_{\mathbf x} := \{ \mathbf n:\,
\bigwedge_{j = 1}^d(|x_j - n_j|\le N_j)\}$. In \cite{FLS, Pog2} it was considered a case when outside
$\mathfrak J_{\mathbf x}$ the time-jitter $\mathfrak h$ vanishes. These results can be generalized
to multidimensional case. Namely, having on mind the infinite product representation of sine--function and replacing
the equidistant nodes having indices inside $\mathfrak J_{x_j}$ with the disturbed $t_{ n_j}$ ones $j=\overline{1,d}$
we get the so--called {\em window canonical product sampling function}
   \begin{equation} \label{19}
      S(\mathbf x, t_{\mathbf n}) = \prod_{j=1}^d \frac{G_{N_j}(x_j,x_j)}{G_{N_j}'(x_j,t_{n_j})(x_j-t_{n_j})}
   \end{equation}
where $G_{N}'(x,t)$ denotes a derivative with respect to $t$, being
   \begin{align} \label{20}
           G_N(x,t) &=  (t-h_0)\, {\rm sinc}(t) \mathop{\prod_{|x-k|\le N}}_{k\neq 0}
                        \left( 1- \frac{h_k }{t-k}\right)\frac{k}{t_k}\, , \\
      {\rm sinc}(t) &:= \begin{cases} \displaystyle
                           \frac{\sin (\pi t)}{\pi t} &\quad \hbox{if}\ \ t \neq 0 \\
                                                    1 &\quad \hbox{if}\ \ t=0
                        \end{cases} \, . \nonumber
   \end{align}
So, when $\mathfrak T$ is separated, one can deduce
   \begin{equation} \label{21}
      f(\mathbf x) = \sum_{\mathbf n \in \mathbb Z^d}f(\mathbf t_{\mathbf n})\prod_{j=1}^d
                     \frac{G_{N_j}(x_j,x_j)}{G_{N_j}'(x_j,t_{n_j})(x_j-t_{n_j})},
   \end{equation}
where the convergence is uniform on bounded subsets of $\mathbb R^d$ under additional suitable conditions
upon $f$ and $\mathfrak T$. We will discuss the conditions for multidimensional case later
(see \cite{hig, hin, mar, Pog3, Seip} for onedimensional case). In this framework the sampling
restoration procedure becomes of Lagrange--Yen type \cite{FLS, mar, Pog3, Yen}.

For the general case of  multidimensional irregular sampling with window canonical product
sampling function $S(\mathbf x, t_{\mathbf n})$ we have time--jittered nodes outside $\mathfrak J_{\mathbf x}$
as well. Nonvanishing time-jitter $\mathfrak h$ outside $\mathfrak J_{\mathbf x}$ leads to
functions $G_N(x,t)$ given by formulae different from (\ref{20}), see \cite{hin}. However, in irregular
sampling applications we would like to approximate $f(\mathbf x)$ using only it's values from sample nodes
$t_\mathbf{n}$ indexed by $\mathfrak J_{\mathbf x}.$ Therefore we can try to use the truncated to
$\mathfrak J_{\boldsymbol x}$ sampling approximation sum
   \[ Y_{\mathfrak J_{\mathbf x}}(f;\mathbf{x})=\sum_{\mathbf n \in \mathfrak
                   J_{\mathbf x}} f(t_\mathbf{n})\prod_{j=1}^d
                   \frac{G_{N_j}(x_j,x_j)}{G_{N_j}'(x_j,t_{n_j})(x_j-t_{n_j})}\]
with $G_{N}(x,t)$ (such that is given by \eqref{20}) even for arbitrary sample nodes outside
$\mathfrak J_{\mathbf x}$. Under such assumptions the {\em truncation error}
   \[ \| T_{\mathbf{N},d}(f;\mathbf x)\|_\infty=\|f(\mathbf x) - Y_{\mathfrak J_{\mathbf x}}(f;\mathbf{x})\|_\infty \]
coincides with truncation error for the case given by \eqref{19}--\eqref{21}. It would seem that in
this framework $T_{\mathbf{N},d}(f;\mathbf x) = Y_{\mathbb Z^d \setminus \mathfrak J_{\mathbf x}}(f;\mathbf{x})$
depends on samples in $\mathfrak T\setminus \{t_\mathbf{n}:\,\mathbf{n}\in \mathfrak J_{\mathbf x}\}$,
but it does not depend on samples in $\{t_\mathbf n \colon \mathbf n \in \mathfrak J_{\mathbf x}\}$, which
are used to build approximation sum $Y_{\mathfrak J_{\mathbf x}}(f;\mathbf{x})$.
However, it is not true, because $T_{\mathbf{N},d}(f;\mathbf x)$ depends on nonvanishing
$\mathfrak h$ in $\mathfrak J_{\mathbf x}$ due to multiplicative form of (\ref{21}).
Therefore, the multidimensional sampling problems are more difficult than the one--dimensional ones.
In this section we propose a way  to obtain universal truncation bounds for multidimensional irregular sampling
restoration procedure.

First of all, we will give the multidimensional sampling theorem for $B_{\boldsymbol \sigma,d}^r$ functional class.

Denote $\mathbf M=(M_1,\cdots,M_d),\, \boldsymbol{\delta}=(\delta_1,\cdots,\delta_d),\,\widetilde{M} :=
\max_{j=\overline{1,d}}M_j.$  Assume that $t_{n_j} = n_j + h_{n_j},\, |h_{n_j}|\le M_j,\,j=\overline{1,d}$;
for all $\mathbf n \in \mathfrak J_{\mathbf x}$.

\begin{theorem}\label{th_ms} Let $f\in B^q_{\boldsymbol \sigma,d},$ $q\ge 1,$ $\sigma_j\le \pi$ for all $j,$
$\mathfrak T = \{ t_{\mathbf n}\}_{\mathbf n \in \mathbb Z^d}$ be real separated sequence with
   \begin{equation} \label{M}
      \widetilde M \le \frac14\, \chi_{\{q=1\}} \quad \textit{and} \quad
      \widetilde M < \frac1{4q}\, \chi_{\{1<q<\infty\}}\, .
   \end{equation}
Then the sampling expansion {\rm (\ref{21})} holds uniformly on each bounded $\mathbf x$--subset of $\Bbb R^d$.
Moreover, the series in {\rm  (\ref{21})} converges absolutely too. Here, in \eqref{M}, $\chi_A$ stands for the
characteristic function of the random event $A$.
\end{theorem}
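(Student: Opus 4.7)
The plan is to reduce the multidimensional expansion in (\ref{21}) to an iterated application of its one--dimensional analogue, one coordinate at a time, and then to promote convergence to absolute convergence using Theorem~\ref{th1}. Because the window canonical product sampling function (\ref{19}) factors across coordinates as $\prod_{j=1}^{d} S_j(x_j,t_{n_j})$, and because Nikolski\u{\i}'s inequality of different dimensions \cite[\S 3.4.2]{nik} guarantees that every one--variable slice $f(x_1,\dots,x_{j-1},\cdot,x_{j+1},\dots,x_d)$ belongs to $B^{q}_{\sigma_j,1}$, the problem is naturally tensorial.

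First I would invoke the one--dimensional irregular sampling theorem of Hinsen \cite{hin,hin1} for the $B^{q}_{\sigma,1}$ class, which yields the Lagrange--Yen reconstruction with canonical product sampling function (\ref{20}) under separation of the nodes plus a jitter bound. Precisely, Hinsen's condition requires $|h_n|\le 1/4$ when $q=1$ and $|h_n|<1/(4q)$ when $1<q<\infty$; this is exactly the coordinatewise content of the hypothesis (\ref{M}). Applying the one--dimensional theorem to the last coordinate gives $f(\mathbf x)=\sum_{n_d}f(x_1,\dots,x_{d-1},t_{n_d})\,S_d(x_d,t_{n_d})$ uniformly in $x_d$ on bounded sets, and since each inner sample $f(x_1,\dots,x_{d-1},t_{n_d})$ is again a section of $f$ that lies in $B^{q}_{\sigma_{d-1},1}$ as a function of $x_{d-1}$, one iterates $d$ times to produce (\ref{21}) with uniform convergence on bounded $\mathbf x$--subsets of $\mathbb R^d$.

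Next, for absolute convergence I would apply H\"older's inequality with the conjugate pair $(p,q)$, $1/p+1/q=1$:
\[
  \sum_{\mathbf n\in\mathbb Z^d}|f(t_{\mathbf n})|\,|S(\mathbf x,t_{\mathbf n})|
  \le \Bigl(\sum_{\mathbf n\in\mathbb Z^d}|f(t_{\mathbf n})|^q\Bigr)^{1/q}
      \Bigl(\sum_{\mathbf n\in\mathbb Z^d}|S(\mathbf x,t_{\mathbf n})|^p\Bigr)^{1/p}.
\]
Theorem~\ref{th1} immediately dominates the first factor by $\mathfrak B_{d,q}^{1/q}\|f\|_q$. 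For the second factor, the product structure of (\ref{19}) reduces the bound to a product of one--dimensional sums $\sum_{n_j}|G_{N_j}(x_j,x_j)/[G'_{N_j}(x_j,t_{n_j})(x_j-t_{n_j})]|^p$, whose finiteness follows from standard lower estimates on the canonical product $G'_{N_j}(x_j,t_{n_j})$ combined with the $|x_j-t_{n_j}|^{-p}$ tail decay, with the case $q=1$ (hence $p=\infty$) handled by a uniform estimate rather than a sum.

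The main obstacle is the last step: obtaining a lower bound for the infinite canonical product $|G'_{N_j}(x_j,t_{n_j})|$ that is uniform in $x_j$ on a bounded set and uniform in the jitter sequence $\{h_{n_j}\}$. This is exactly where the sharp thresholds $1/4$ (for $q=1$) and $1/(4q)$ (for $1<q<\infty$) enter: they guarantee that the perturbation factors $1-h_k/(t-k)$ stay bounded away from zero, so the canonical product retains its sinc--like profile with enough margin to give $p$--summability of $|S_j(x_j,t_{n_j})|^p$ along $n_j$. Once this one--dimensional control is secured, the multidimensional conclusion is a clean tensor--product consequence, and absolute together with uniform convergence follow simultaneously.
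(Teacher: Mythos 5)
Your argument for the uniform convergence part is essentially the paper's: you tensorize via Nikolski\u{\i}'s inequality of different dimensions to put each one--variable section into $B^q_{\sigma_j,1}$, invoke Hinsen's one--dimensional irregular sampling theorem under exactly the jitter thresholds in \eqref{M}, and iterate coordinate by coordinate to reach \eqref{21}. Two points differ. First, the paper is slightly more careful about \emph{why} Hinsen applies: it rewrites $G_N(x,t)$ from \eqref{20} into the genuine canonical product \eqref{G} with nodes $\widetilde t_n$ (perturbed only inside the window, equal to $n$ outside) and notes that Hinsen's results survive the modification $h_0\neq 0$; you apply Hinsen's theorem to the window product directly without this reduction, which is a small but real step since Hinsen's $G(t)$ has $h_0=0$ and perturbs all nodes. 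Second, and more substantively, your route to absolute convergence is different: the paper simply inherits absolute convergence of each one--dimensional expansion from Hinsen (valid for $\widetilde M<1/(4q)$) and observes that the iteration then yields an absolutely convergent $d$--tuple series, whereas you propose H\"older's inequality, bounding $\bigl(\sum_{\mathbf n}|f(t_{\mathbf n})|^q\bigr)^{1/q}$ by Theorem~\ref{th1} and requiring $\sum_{\mathbf n}|S(\mathbf x,t_{\mathbf n})|^p<\infty$. That last finiteness claim is precisely the hard technical content of the paper's Theorem~\ref{th_ir} (the constants $C_1,\dots,C_4$), so labelling it ``standard lower estimates on the canonical product'' leaves the heaviest step of your version unproved; it is provable, and your scheme would then deliver a quantitative bound as a bonus, but as written it is a deferral rather than a proof, and it is strictly more work than the paper needs for this theorem.
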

\begin{proof} First $\mathfrak T$ is obviously separated since
   \[ \delta_j =   \inf_{n_{j}\in \mathbb Z, \,k \in \mathbb N}\big|t_{n_j+k}-t_{n_j}\big|
               \ge 1-2 M_j \ge 1-2\widetilde M > 1-\frac1{2q}>0  \qquad \big(\mathbf n \in
               \mathfrak J_{\mathbf x}\big).\]
Now, by the assumption $f\in B^q_{\boldsymbol \sigma,d}, \ q\ge 1$ and by the inequality of different dimensions
\cite[\S 3.4.2]{nik} we conclude $f(x_1,\dots,x_{j-1},\cdot,x_{j+1},\dots,x_{n})\in B^q_{\sigma_j,1}, j=\overline{1,d}$.
On the other hand we rewrite the function $G_N(x,t)$ (defined by \eqref{20}) into
   \begin{equation} \label{G}
      G_N(x,t) = (t-h_0) \prod_{n=1}^\infty \left(1-\frac{t}{\widetilde{t}_n}\right)
                 \left(1-\frac{t}{\widetilde{t}_{-n}}\right),
   \end{equation}
where
   \[ \widetilde{t}_n = \begin{cases}
                           t_n &\quad \hbox{if}\ \ |x-n|\le N \\
                            n  &\quad \hbox{else.}
                        \end{cases} \, .\]
Because $G_N(x,t)$ has representation (\ref{G}), it is clear that all results of \cite{hin} remain valid if
one uses $h_0 \neq 0$ instead of $h_0=0$, and make use of $G_N(x,t)$ instead of
$G(t)=t\prod_{n=1}^\infty \big(1-t/{t_n}\big)\big(1-t/{t_{-n}}\big)$, considered by Hinsen \cite{hin}.
Therefore, if (\ref{M}) holds, we have
   \begin{equation} \label{S1}
      f(x_1,\dots,x_{n}) = \sum_{n_j \in \mathbb Z}f(x_1,\dots,x_{j-1},t_{n_j},x_{j+1},\dots,x_{n})
             \frac{G_{N_j}(x_j,x_j)}{G_{N_j}'(x_j,t_{n_j})(x_j-t_{n_j})}.
   \end{equation}
We can subsequently apply the same sampling expansion formula (\ref{S1}) with respect to all variables $x_k$ on
the right side of (\ref{S1}), being (\ref{S1}) absolutely convergent for $\widetilde M<1/(4q)$ \cite{hin}.
This procedure results in absolute convergent $d$--tuple series on the right in (\ref{21}).
\end{proof}

\section{Truncation error upper bounds}
The belonging truncation error upper bound result reads as follows.
\begin{theorem}\label{th_ir} Let $\widetilde{M}$ satisfy {\rm \eqref{M}}, $f\in B^q_{\boldsymbol \sigma,d},\,
q\ge 1,\, \sigma_j\le \pi$ for all $j=\overline{1,d}$. Then we have
   \begin{equation} \label{22}
      \|T_{\mathbf{N},d}(f,\mathbf x)\|_\infty \le K_{\boldsymbol \delta}(\mathbf N,\mathbf M)\cdot \|f\|_q\,
   \end{equation}
where
   \begin{align} \label{23}
      K_{\boldsymbol \delta}(\mathbf N,\mathbf M) &= \Big(\frac8{q \pi^2}\Big)^{d/q}\, \prod_{j=1}^d
                            \frac{(e^{q \pi \delta_j/2}-1)^{1/q}}{\delta_j^{2/q}}
                            \Bigg(\sum_{k=1}^d \Big\{C_1(N_k,M_k)
                            \mathop{\prod_{j=1}^d}_{j\neq k} \Big( C_1(N_j,M_j) \nonumber \\
                  &\qquad + C_3^p(N_j,M_j) + C_4(N_j)\,C_2^p(N_j,M_j,\delta_j)\Big)\Big\}\Bigg)^{1/p} \, ,
   \end{align}
and
   \begin{align*}
              C_1(N,M) &:= 2\,\left( \frac{2^{2M}(M+1/2)(1+2M)^{1-2M}}{(1-M)^M (2N+1-M)^{2N+1-M}}  \right.\\
                       &\qquad \times \left. \frac{(N-1/2)(2N+1)^{2N+1}(N+M)^{2(N+M)}}
                                                         {(N-M-1/2)N^{2N+1}}\right)^p\Big( 1 + \frac N{p-1}\Big); \\
       C_2(N,M,\delta) &:= C_3(N,M)(M+1/2)(1+M/\delta);\\
              C_3(N,M) &:= \frac{2^{2M}(1-M)^{2M}}{{\rm sinc}(M)\,(1+2M)^{2M-1}(1-2M)^{4M}}\, \cdot\\
                       &\qquad \times \frac{(N-1-M)^{2(N-1-M)}(N+M)^{2(N+M)}}
                           {(N-1-2M)^{2(N-1-2M)}N^{2N}}\, ;\\
                C_4(N) &:= \frac{1}{p-1}\left(2^{p-1}(2p-1)+p-(N-1/2)^{1-p}-(N-1)^{1-p}\right)\,.
   \end{align*}
\end{theorem}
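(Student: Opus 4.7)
The plan is to combine the sampling expansion of Theorem~\ref{th_ms} with the H\"older-type splitting \eqref{4} and the Plancherel--P\'olya bound of Theorem~\ref{th1}, reducing the control of the sampling-kernel tail to one-dimensional estimates for the canonical product $G_{N_j}(x_j,\cdot)$.

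By Theorem~\ref{th_ms}, the truncation error equals the complementary tail
\[
T_{\mathbf{N},d}(f;\mathbf{x}) = \sum_{\mathbf n \in \mathbb{Z}^d \setminus \mathfrak{J}_{\mathbf{x}}} f(t_{\mathbf n}) \prod_{j=1}^d s_j(x_j, t_{n_j}), \qquad s_j(x_j, t_{n_j}) := \frac{G_{N_j}(x_j, x_j)}{G_{N_j}'(x_j, t_{n_j})(x_j - t_{n_j})},
\]
and this series converges absolutely. H\"older's inequality in the form \eqref{4}, applied with the index set $\mathbb{Z}^d \setminus \mathfrak{J}_{\mathbf x}$, gives $|T_{\mathbf{N},d}(f;\mathbf x)| \le A_p B_q$. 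For $B_q$, I would majorize by the full lattice sum and invoke Theorem~\ref{th1} with separation constants $\delta_j \ge 1 - 2\widetilde M > 0$ (established inside the proof of Theorem~\ref{th_ms}); since $\sigma \mapsto (e^{q\delta\sigma/2}-1)/\sigma$ is monotone increasing, substituting the worst case $\sigma_j = \pi$ produces the explicit prefactor $(8/(q\pi^2))^{d/q}\prod_j (e^{q\pi\delta_j/2}-1)^{1/q}/\delta_j^{2/q}$ seen in \eqref{23}, times $\|f\|_q$.

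For $A_p$, the product structure $|S(\mathbf x,t_{\mathbf n})|^p = \prod_j |s_j|^p$ together with the identity ``$\mathbf n \notin \mathfrak{J}_{\mathbf x}$ iff $|x_k - n_k| > N_k$ for some $k$'' and a union bound yields
\[
A_p^p \le \sum_{k=1}^d \Bigl(\sum_{|x_k - n_k| > N_k} |s_k(x_k, t_{n_k})|^p\Bigr) \prod_{j \neq k} \Bigl(\sum_{n_j \in \mathbb{Z}} |s_j(x_j, t_{n_j})|^p\Bigr),
\]
which matches the outer $(\sum_k\cdots)^{1/p}$ shape of \eqref{23}. It then remains to prove two one-dimensional estimates with the stated constants: a tail bound $\sum_{|x-n|>N}|s(x,t_n)|^p \le C_1(N,M)$, and a total bound $\sum_{n \in \mathbb{Z}} |s(x,t_n)|^p \le C_1(N,M) + C_3^p(N,M) + C_4(N) C_2^p(N,M,\delta)$, the three summands of the latter corresponding to the out-part (again $C_1$), an isolated contribution controlled by a pointwise envelope $C_3$ for $|s(x,t_n)|$ on the window, and the remaining in-sum.

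The main obstacle is obtaining these explicit one-dimensional bounds. Starting from \eqref{20}, one must expand $G_N(x,x)$ and $G_N'(x,t_n)$ as finite canonical products, then apply Stirling's formula (via $\Gamma$-function ratios) under the jitter restriction $|h_k| \le M \le 1/(4q)$ to majorize the resulting factors of the form $(N\pm M)^{2(N\pm M)}$, thereby producing $C_3$ as a uniform envelope. The jitter correction $(M+\tfrac12)(1+M/\delta)$ in $C_2$ arises from replacing $(x-t_n)^{-1}$ by $(x-n)^{-1}$ with error controlled by $|h_n|/\delta$, and $C_4$ emerges from a Riemann-integral majorant of $\sum_{1 \le |k| \le N}|x-k|^{-p}$ which is uniform in $x$. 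Plugging these $d$-fold products into the subadditive decomposition above, combining with the $B_q$ bound, and taking the pointwise supremum (which is immediate since all constants depend only on $\mathbf N, \mathbf M, \boldsymbol\delta$) yields \eqref{22}--\eqref{23}.
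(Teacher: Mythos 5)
Your scaffolding coincides with the paper's: the H\"older splitting into $A_pB_q$ over $\mathbb Z^d\setminus\mathfrak J_{\mathbf x}$, the union--bound factorization of $A_p^p$ into $\sum_{k=1}^d(\hbox{tail sum in coordinate }k)\prod_{j\neq k}(\hbox{full sum in coordinate }j)$ is literally the paper's display \eqref{26}, the three--part split of the full one--dimensional sum into the out--window part ($C_1$), the single node $n=j_x$ ($C_3^p$) and the remaining in--window nodes ($C_4\,C_2^p$) is exactly what the paper does, and $B_q$ is handled identically via Theorem~\ref{th1} with $\sigma_j=\pi$. So the architecture is right, and your explicit remark that the kernel for $n_k\notin\mathfrak J_{x_k}$ still carries the in--window jitter through the canonical product is consistent with the paper's treatment.

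The gap sits in the only place where the theorem has real content: the derivation of the explicit constants $C_1,C_2,C_3$. You defer the one--dimensional envelope estimates to ``Stirling's formula via $\Gamma$--function ratios.'' The paper instead writes, for $n\notin\mathfrak J_x$, $|\psi_N(n,x)|$ as $|{\rm sinc}(x-j_x)|$ times the $j_x$--factor times $\prod_{j\neq j_x}\bigl|\tfrac{(t_j-x)(j-n)}{(t_j-n)(j-x)}\bigr|$, bounds each factor by $\bigl(1+\tfrac{M}{|j-n|-M}\bigr)\bigl(1+\tfrac{M}{|j-x|}\bigr)$, exponentiates the sum of logarithms, and compares the log--sums with the explicit integrals $\int_1^{2N+1}\ln\bigl(\tfrac{t}{t-M}\bigr)dt$ and $\int_{1/2}^{N}\ln\bigl(1+\tfrac{M}{t}\bigr)dt$; the closed forms of these integrals are precisely what produce the factors $(2N+1)^{2N+1}(2N+1-M)^{-(2N+1-M)}(1-M)^{-M}$ and $2^{2M}(1+2M)^{1-2M}(N+M)^{2(N+M)}N^{-2N}$ inside $C_1$ and $C_3$. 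A Stirling/Gamma--ratio argument yields asymptotically comparable bounds but with extraneous $\sqrt{2\pi n}$--type factors and error terms, so it would not land on the stated constants --- and the theorem \emph{is} the statement of those constants. Two further specifics you misattribute or omit: the factor $(M+1/2)(1+M/\delta)$ in $C_2$ does not come from trading $(x-t_n)^{-1}$ for $(x-n)^{-1}$, but from the two separate bounds $|{\rm sinc}(x-j_x)(t_{j_x}-x)|\le M+1/2$ (inequality \eqref{est1}) and $|(j_x-t_n)/(t_{j_x}-t_n)|\le 1+M/\delta$ (first line of \eqref{est2'}), the latter using the separation constant of the jittered nodes; and the case $n=j_x$ requires the bound $|\pi(n-t_n)/\sin(\pi t_n)|\le 1/{\rm sinc}(M)$ from \eqref{est1'}, which is where the ${\rm sinc}(M)$ in the denominator of $C_3$ originates and which your sketch never produces. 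In short: correct decomposition, but the quantitative core is missing and the method you propose for it would not reproduce \eqref{23}.
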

\begin{proof}
As we mentioned earlier in section 3, the function $Y_{\mathfrak J_{\mathbf x}}(f;\mathbf{x})$ does not depend on samples in
$\mathfrak T\setminus \{t_\mathbf{n}:\,\mathbf{n}\in \mathfrak J_{\mathbf x}\}$ and we assume that
outside $\mathfrak J_{\mathbf x}$ it is $\mathfrak h \equiv 0.$  Therefore, the structure of
$\{t_{n_j}:\,n_j\not\in \mathfrak J_{x_j}\}$ becomes uniform $t_{n_j} \equiv n_j$. In this case Theorem~\ref{th_ms}
guarantees that $f(\mathbf x)$ admits the representation (\ref{21}), and the so evaluated model (\ref{4}) gives us
   \[  \big| T_{\mathbf{N},d}(f;\mathbf x)\big|  \le
                          \underbrace{
                          \Big(\sum_{\mathbf n \in \mathbb Z^d \setminus \mathfrak J_{\mathbf x}}
                          \prod_{j=1}^d \Bigl|\frac{G_{N_j}(x_j,x_j)}{G_{N_j}'({x_j,t_{n_j}})
                          (x_j-{t_{n_j}})}\Bigr|^p\Big)^{1/p}}_{A_p}\,
                          \underbrace{
                          \Big(\sum_{\mathbf n \in \mathbb Z^d \setminus \mathfrak J_{\mathbf x}}|f(\mathbf{n})|^q
                          \Big)^{1/q}}_{B_q}\,.   \]
The multiplicative structure of $S(\mathbf{x},t_\mathbf{n})$ enables to estimate $A_p$ in the following way
   \begin{align} \label{26}
      A_p^p &\le \sum_{k=1}^d\sum_{n_k\in \mathbb Z\setminus  \mathfrak J_{x_k}}
                 \Big|\frac{G_{N_k}(x_k,x_k)}{G_{N_k}'({x_k,n_k})(x_k-{n_k})}\Big|^p
                 \mathop{\prod_{j=1}^d}_{j \neq k}\, \sum_{n_j \in \mathbb Z}
                 \Big|\frac{G_{N_j}(x_j,x_j)}{G_{N_j}'(x_j,t_{n_j})(x_j-t_{n_j})}\Big|^p \nonumber \\
            &=   \sum_{k=1}^d\sum_{n_k\in \mathbb Z\setminus \mathfrak J_{x_k}}
                 \Big|\frac{G_{N_k}(x_k,x_k)}{G_{N_k}'({x_k,n_k})(x_k-{n_k})}\Big|^p
                 \mathop{\prod_{j=1}^d}_{j\neq k}\,\Bigg(\sum_{n_j \in \mathbb Z\setminus \mathfrak J_{x_j}}
                 \Big|\frac{G_{N_j}(x_j,x_j)}{G_{N_j}'({x_j,n_j})(x_j-n_j)}\Big|^p \nonumber \\
            &\qquad + \sum_{n_j \in \mathfrak J_{x_j}}
                 \Big|\frac{G_{N_j}(x_j,x_j)}{G_{N_j}'(x_j,t_{n_j})(x_j-t_{n_j})}\Big|^p\Bigg)\, .
   \end{align}
Let us estimate $\sum_{n \in \mathbb Z\setminus \mathfrak J_{x}}\Big|\frac{G_{N}(x,x)}{G_{N}'(x,{n})(x-{n})}\Big|^p.$
Note, that due to our assumptions
   \[|\psi_{N}(n,x)|:= \left|\frac{G_{N}(x,x)}{G_{N}'(x,{t_n})(x-t_n)}\right| =
                       \left|\frac{\sin(\pi x)}{\pi(x-n)}\prod_{j\in\mathfrak J_{x}}
                       \frac{(t_j-x)(j-n)}{(t_j-n)(j-x)}\right|  \, , \]
for all $n \in \mathbb Z\setminus \mathfrak J_{x}$. Hence
   \[|\psi_{N}(n,x)|= \Bigg|\,{\rm sinc}(x-j_x)\frac{(t_{j_x}-x)(j_x-n)}{(x-n)(t_{j_x}-n)}
                      \mathop{\prod_{|j-x|\le N}}_{j \neq j_x} \frac{(t_j-x)(j-n)}{(t_j-n)(j-x)}\Bigg|\, , \]
where $j_x$ denotes the index closest to $x$, i.e. $j_x-1/2\le x<j_x+1/2\,.$

Due to $|h_{j_x}|\le M$ we have
   \begin{equation} \label{est1}
      \Big|{\rm sinc}(x-j_x)\,\frac{t_{j_x}-x}{x-n}\Big|\le \frac{M+1/2}{|x-n|}\,;
   \end{equation}
and
   \begin{align} \label{est2}
      \Big|\frac{j_x-n}{t_{j_x}-n}\Big| &\le 1+\frac{M}{|x-n|-M-1/2}\le 1+\frac{M}{N-M-1/2}\,;\\
      \Big|\frac{j-n}{t_j-n}\,\frac{t_j-x}{j-x}\Big| &\le \Big(1+\frac{M}{|j-n|-M}\Big)\cdot
                                                      \Big(1+\frac{M}{|j-x|}\Big)\, . \nonumber
   \end{align}
Then, being
   \[ \mathop{\prod_{|j-x|\le N}}_{j\neq j_x}\, \Big|\frac{(t_j-x)(j-n)}{(t_j-n)(j-x)}\Big| \le
              e^{\displaystyle \mathop{\sum_{|j-x| \le N}}_{j \neq j_x}\,\Big\{ \ln\Big(1+\frac{M}{|j-n|-M}\Big)+
                                 \ln \Big(1+\frac{M}{|j-x|}\Big) \Big\} }\]
let us estimate the first sum:
   \begin{align*}
      \mathop{\sum_{|j-x|\le N}}_{j\neq j_x} \ln  \Big(1 + \frac{M}{|j-n|-M}\Big)
              & < \sum_{k=1}^{2N+1}\ln\Big(1 +\frac{M}{k-M}\Big)
            \le \int_1^{2N+1}\ln\Big(\frac{t}{t-M}\Big)dt \\
              - \ln\left(1-M\right) &= \ln\frac{(2N+1)^{2N+1}}{(2N+1-M)^{(2N+1-M)}(1-M)^M}\, ,
   \end{align*}
and the second one as
   \begin{align} \label{est5}
      \mathop{\sum_{|j-x|\le N}}_{j\neq j_x} \ln\Big(1&+\frac{M}{|j-x|}\Big)
               \le 2\Big\{ \ln(1+2M)+\int_{1/2}^N\ln\Big(1+\frac{M}{t}\Big)\,dt\Big\} \nonumber \\
       &\qquad \le  \ln\,2^{2M}(1+2M)^{1-2M}(N+M)^{2(N+M)}N^{-2N}\, .
                 \end{align}
Collecting all estimates (\ref{est1})-(\ref{est5}), we conclude
   \begin{align*}
      \big|\psi_{N}(n,x)\big| &\le \frac{M+1/2}{|x-n|}\Big(1+\frac{M}{N-M-1/2}\Big)\,
                                   \frac{2^{2M}(1+2M)^{1-2M}}{(1-M)^M} \\
                         &\qquad \times \frac{(2N+1)^{2N+1}(N+M)^{2(N+M)}}{(2N+1-M)^{2N+1-M}N^{2N}} \, ,
   \end{align*}
and hence
   \begin{align*}
      &\sum_{n \in \mathbb Z\setminus  \mathfrak J_{x}}\Big|\frac{G_{N}(x,x)}{G_{N}'({x,n})(x-{n})}\Big|^p
             = \sum_{n \in \mathbb Z\setminus  \mathfrak J_{x}}\big|\psi_{N}(n,x)\big|^p
             \le \sum_{n \in \mathbb Z\setminus  \mathfrak J_{x}}\frac1{|x-n|^p} \\
            &\,\, \times \Bigg\{\frac{2^{2M}(N-1/2)(M+1/2)(2N+1)^{2N+1}(1+2M)^{1-2M}(N+M)^{2(N+M)}}
            {(N-M-1/2)(2N+1-M)^{2N+1-M}(1-M)^M N^{2N}}\Bigg\}^p\, .
   \end{align*}
Thus, we proceed evaluating
   \[ \sum_{n \in \mathbb Z\setminus  \mathfrak J_{x}}\frac{1}{|x-n|^p}\le
        2\left(\frac{1}{N^{p}}+\int_N^\infty\frac{dt}{t^p}\right) =
        2\left(\frac{1}{N^{p}}+\frac{1}{(p-1)\,N^{p-1}}\right) \, ,\]
such that gives
   \[ \sum_{n \in \mathbb Z\setminus  \mathfrak J_{x}} \Big|\frac{G_{N}(x,x)}{G_{N}'({x,n})(x-{n})}\Big|^p
      \le  C_1(N,M).\]
Now, let us evaluate $\sum_{n\in \mathfrak J_{x}}|\psi_N(n,x)|^p$, the second addend in \eqref{26}. Note,
that due to the declared assumptions of the Theorem, we have
   \[ |\psi_{N}(n,x)| = \Bigg|\frac{(n-t_n)\sin(\pi x)}{(n-x)\sin(\pi t_n)}
                         \mathop{\prod_{j\in \mathfrak J_{x}}}_{j\neq n}
                         \frac{(t_j-x)(j-t_n)}{(t_j-t_n)(j-x)}\Bigg| \qquad \big( n \in \mathfrak J_{x}\big)\,.\]
Let us consider the case $n \neq = j_x$. Then
   \[ \big|\psi_{N}(n,x)\big| = \Bigg|\frac{(n-t_n)\sin(\pi x)}{(n-x)\sin(\pi t_n)}
                                \frac{(t_{j_x}-x)(j_x-t_n)}{(j_x-x)(t_{j_x}-t_n)}
                                \mathop{\prod_{|j-x|\le N}}_{j\neq j_x,\,j\neq n}
                                \frac{(t_j-x)(j-t_n)}{(t_j-t_n)(j-x)}\Bigg|\,.\]
As {\rm sinc}$(t)$ decreases in $t\in [0,1/2]$, we get
   \begin{equation} \label{est1'}
      \Big|\frac{\pi(n-t_n)}{\sin(\pi t_n)}\Big|\le \frac1{{\rm sinc}(M)}\,,
   \end{equation}
and the inequality (\ref{est1}) is valid in our case too. Together with further necessary estimates, such as
   \begin{align} \label{est2'}
      \Big|\frac{j_x-t_n}{t_{j_x}-t_n}\Big| &\le 1+\frac{M}{\delta}\,; \nonumber\\
      \Big|\frac{j-t_n}{t_j-t_n} \cdot \frac{t_j-x}{j-x}\Big| &\le \Big(1+\frac{M}{|j-n|-2M}\Big)
                                       \Big(1+\frac{M}{|j-x|}\Big)\, ,
   \end{align}
we deduce
   \begin{align} \label{est3'}
      \mathop{\prod_{|j-x|\le N}}_{j\neq j_x,\,j\neq n} \Big|\frac{(t_j-x)(j-t_n)}{(t_j-t_n)(j-x)}\Big| &\le
           \exp\Big( \displaystyle\mathop{\sum_{|j-x|\le N}}_{j\neq j_x\,j\neq n}
           \Big\{ \ln\Big(1+\frac{M}{|j-n|-2M}\Big)\nonumber \\
           & \qquad + \ln\Big(1+\frac{M}{|j-x|}\Big)\Big\}\Big) \,.
   \end{align}
Let us estimate the first sum as
   \begin{align} \label{est4'}
      &\mathop{\sum_{|j-x|\le N}}_{j\neq j_x,\,j\neq n} \ln\left(1+\frac{M}{|j-n|-2M}\right)
         \le 2\sum_{k=1}^{N-1}\ln\left(1+\frac{M}{k-2M}\right)
         \le 2\left(\ln\left(1+\frac{M}{1-2M}\right)\right. \nonumber \\
      &\qquad \quad +  \left. \int_1^{N-1}\ln\left(1+\frac{M}{t-2M}\right)dt\right) =
         \ln \frac{(N-1-M)^{2(N-1-M)}(1-M)^{2M}}{(N-1-2M)^{2(N-1-2M)}(1-2M)^{4M}}\, .
   \end{align}
Making use of \eqref{est5} to the second sum in \eqref{est3'} we have
   \begin{equation} \label{est5'}
      \mathop{\sum_{|j-x|\le N}}_{j\neq j_x\,j \neq n} \ln \Big(1+\frac{M}{|j-x|}\Big)
         \le \ln \frac{2^{2M}(1+2M)^{1-2M}(N+M)^{2(N+M)}}{N^{2N}}\, .
   \end{equation}
Combining all estimates (\ref{est1'})-(\ref{est5'}), we obtain
   \[ \big|\psi_{N}(n,x)\big| \le \frac{C_2(N,M,\delta)}{|x-n|} \qquad
      \big(n \in \mathfrak J_{x}\setminus \{j_x\}\big)\,.\]
Let us consider the case $n=j_x.$ As sinc$(t)$ decreases, by (\ref{est1'}) it follows
   \[ \left|\frac{(n-t_n)\sin(\pi x)}{(n-x)\sin(\pi t_n)}\right|\le {\rm sinc}^{-1}(M)\,.\]
Hence, for $n=j_x$  we get
   \[|\psi_{N}(n,x)|\le \frac{C_2(N,M,\delta)}{(M+1/2)(1+M/\delta)} = C_3(N,M), \]
because (\ref{est3'})-(\ref{est5'}) are still valid in this case as well. Combining all upper bounds concerning
$|\psi_{N}(n,x)|$ we have
   \[ \sum_{n\in \mathfrak J_{x}}\Big|\frac{G_{N}(x,x)}{G_{N}'(x,t_{n})(x-t_{n})}\Big|^p\le
            C^p_3(N,M)+C_2^p(N,M,\delta)\mathop{\sum_{n \in \mathfrak J_{x}}}_{n\neq j_x} |x-n|^{-p}\,.\]
It remains to realize the estimate
   \begin{align*}
      \mathop{\sum_{n \in \mathfrak J_x}}_{n \neq j_x} \frac{1}{|x-n|^p} &\le
              \sum_{n=1}^N \frac{1}{(n-1/2)^p}+\sum_{n=1}^{N-1}\frac{1}{n^p} \\
        & \le 2^p+\int_1^N\frac{dt}{(t-1/2)^p}+1+\int_1^{N-1}\frac{dt}{t^p} = C_4(N)\, ,
   \end{align*}
such that gives
   \[\sum_{n\in \mathfrak J_{x}}\Bigl|\frac{G_{N}(x,x)}{G_{N}'(x,t_{n})(x-t_{n})}\Bigr|^p\le
           C^p_3(N,M)+C_4(N)\,C_2^p(N,M,\delta). \]
Therefore, by (\ref{26})
   \begin{align} \label{Ap}
      A_p^p &\le \sum_{k=1}^d \Bigg(C_1(N_k,M_k)
                 \mathop{\prod_{j=1}^d}_{j\neq k}\Big( C_1(N_j,M_j) \nonumber \\
               & \qquad + C^p_3(N_j,M_j) +
                 C_4(N_j)\,C_2^p(N_j,M_j,\delta_j) \Big)\Bigg)\,.
   \end{align}
To estimate $B_q$ we use Theorem~\ref{th1} with $\sigma_1= \cdots = \sigma_d= \pi$. This results in
   \begin{equation} \label{B_destimate}
      B_q^q\le \mathfrak B_{d,q} = \left(\frac{8}{q \pi^2}\right)^d\, \prod_{j=1}^d \frac{e^{q \pi
                                   \delta_j/2}-1}{\delta_j^2}.
   \end{equation}
Now, collecting all these involved estimates, we arrive at (\ref{22}) and (\ref{23}).
\end{proof}

Denote $\underline{\delta}:=\min_{j=\overline{1,d}}{\delta_j}$ and
$\overline{\delta}:=\max_{j=\overline{1,d}}{\delta_j}.$

Let us consider the Paley-Wiener space
$PW_{\pi,d}^q,\ q \ge 1$ of all complex - valued $L^q(\mathbb
R^d)$--functions whose Fourier spectrum is bandlimited to
$[-\pi,\pi]^d$.

\begin{corollary}\label{cor} Let $f\in PW^q_{\pi,d},$ $q\ge 1,$ $\widetilde{M}$ satisfy {\rm(\ref{M}).} Then we have
   \[ \| T_{\mathbf{N},d}(f,\mathbf x)\|_\infty \le \widetilde{K}(\mathbf N, \widetilde{M},
      \underline{\delta},\overline{\delta}) \cdot\|f\|_q\,\]
where
   \begin{align}
      \widetilde{K}(\mathbf N, &  \widetilde{M},\underline{\delta},\overline{\delta})
                                = \Bigg(\sum_{k=1}^d \Big(C_1(N_k,\widetilde{M})
                                  \prod_{\stackrel{j=\overline{1,d}}{j\not=k}}\Big(
                                  C_1(N_j,\widetilde{M})+C^p_3(N_j,\widetilde{M})+C_4\,(N_j)\nonumber\\
                          &\times C_2^p(N_j,\widetilde{M},\underline{\delta})\Big)\Big)\Bigg)^{1/p}
                                  \max{}^{d/q}\left( \frac{8(e^{q \pi
                                  \underline{\delta}/2}-1)}{q \pi^2\underline{\delta}^2},\ \frac{8(e^{q \pi
                                  \overline{\delta}/2}-1)}{q \pi^2\overline{\delta}^2}\right)\, .\label{K}
   \end{align}
\end{corollary}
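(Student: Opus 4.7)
My plan is to reduce the corollary to Theorem~\ref{th_ir} by a two--step majorization. Since any element of $PW_{\pi,d}^q$ is automatically an element of the Bernstein class $B^q_{\boldsymbol\sigma,d}$ with $\boldsymbol\sigma=(\pi,\dots,\pi)$, and since the hypothesis \eqref{M} on $\widetilde M$ is already exactly what Theorem~\ref{th_ir} requires, the theorem applies verbatim and immediately delivers
$\|T_{\mathbf N,d}(f,\mathbf x)\|_\infty \le K_{\boldsymbol\delta}(\mathbf N,\mathbf M)\,\|f\|_q$.
The whole task then reduces to the coordinate--free majorization $K_{\boldsymbol\delta}(\mathbf N,\mathbf M) \le \widetilde K(\mathbf N,\widetilde M,\underline\delta,\overline\delta)$, which I would handle by treating the two factors on the right side of \eqref{23} separately.

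For the $A_p$--factor, the idea is to replace each $M_j$ by the larger $\widetilde M$ and each $\delta_j$ by the smaller $\underline\delta$ and to invoke monotonicity of $C_1$, $C_2$, $C_3$ in the appropriate variable. The coefficient $C_4(N)$ involves neither $M$ nor $\delta$, so it is unchanged. For $C_2(N,M,\delta)=C_3(N,M)(M+1/2)(1+M/\delta)$, monotone increase in $M$ and monotone decrease in $\delta$ are visible once monotonicity of $C_3$ in $M$ is established. For $C_1$ and $C_3$ the admissible range $M\in[0,1/(4q)]$ forced by \eqref{M} keeps every base--exponent pair in its monotone regime, so elementary log--differentiation and sign--tracking of the exponents appearing (such as $1-2M$, $2N+1-M$, $N-M-1/2$) yields the desired nondecreasing dependence on $M$. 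Substituting these bounds into each summand in \eqref{23} and extracting the $p$--th root produces the first factor of \eqref{K}.

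For the $B_q$--factor, set $g(\delta)=(e^{q\pi\delta/2}-1)/\delta^2$. Computing $g'(\delta)$ shows that its sign is governed by $h(\delta):=(q\pi\delta/2-2)e^{q\pi\delta/2}+2$, with $h(0)=0$ and $h'(\delta)=(q\pi/2)^2\delta\,e^{q\pi\delta/2}\cdot\bigl(1-2/(q\pi\delta)\bigr)$; hence $h$ first decreases then increases, so $g$ is first decreasing then increasing on $(0,\infty)$ and attains its maximum on any closed interval at an endpoint. Therefore each factor $(e^{q\pi\delta_j/2}-1)/\delta_j^2$ in the product over $j$ is bounded by $\max\{g(\underline\delta),g(\overline\delta)\}$; taking the $d$--fold product, folding the constant $(8/(q\pi^2))^{d/q}$ inside the maximum, and then extracting the $1/q$--power delivers exactly the last factor of \eqref{K}.

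Combining the two majorizations coordinatewise yields $\widetilde K$ and completes the argument. I expect the main obstacle to be the monotonicity check for $C_1$ and $C_3$ in $M$, since each is a product of several $M$--dependent terms whose exponents themselves depend on $M$, so the sign analysis of the logarithmic derivative has to be carried out carefully on the restricted range $M\le 1/(4q)$. Everything else is either the one--variable calculus of $g$ or straightforward substitution of the inequalities $M_j\le\widetilde M$ and $\delta_j\ge\underline\delta$.
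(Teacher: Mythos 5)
Your overall strategy coincides with the paper's: the inclusion $PW^q_{\pi,d}\subset B^q_{\boldsymbol\sigma,d}$ with $\sigma_j=\pi$ makes Theorem~\ref{th_ir} applicable; the $B_q$--factor is handled exactly as in the paper (the sign of $g'$ is governed by $h(\delta)=(q\pi\delta/2-2)e^{q\pi\delta/2}+2$, which vanishes at $0$, first decreases and then increases, so $g$ attains its maximum over $[\underline\delta,\overline\delta]$ at an endpoint --- the paper merely adds the Lambert~W expression for the interior critical point); and the $A_p$--factor is majorized by substituting $\widetilde M$ for every $M_j$ and $\underline\delta$ for every $\delta_j$ in \eqref{Ap}.

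The one place where your plan, taken literally, would break down is the claim that for $C_1$ and $C_3$ ``every base--exponent pair is in its monotone regime'' on $M\in[0,1/(4q)]$, so that a factor-by-factor sign check of logarithmic derivatives gives monotonicity in $M$. This is false: $\frac{d}{dM}\bigl[(1-2M)\ln(1+2M)\bigr]=-2\ln(1+2M)+2\tfrac{1-2M}{1+2M}$ changes sign inside $[0,1/4]$, and in $C_3$ the factors $(1-M)^{2M}$ and $(N-1-M)^{2(N-1-M)}$ are actually \emph{decreasing} in $M$. The monotonicity you need is nevertheless true, but it must be read off from the representations out of which these constants were built: each of $C_1,C_2,C_3$ is (a power of) the exponential of a sum of terms of the form $\ln\bigl(1+\tfrac{M}{a-cM}\bigr)$, $\ln(M+1/2)$, $-\ln{\rm sinc}(M)$, $\ln(1+M/\delta)$ and integrals of such expressions, every one of which is pointwise nondecreasing in $M$ and nonincreasing in $\delta$. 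Equivalently --- and this is how the paper argues, in one parenthetical sentence --- since $|h_{n_j}|\le M_j\le\widetilde M$ and the separation constants satisfy $\delta_j\ge\underline\delta$, every intermediate pointwise inequality in the proof of Theorem~\ref{th_ir} remains valid with $\widetilde M$ and $\underline\delta$ in place of $M_j$ and $\delta_j$, so the bound \eqref{Ap} simply re-derives for the larger admissible class of sampling sets. With that repair your argument is complete and delivers \eqref{K} exactly as in the paper.
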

\noindent {\bf Proof:} We can use the estimate (\ref{Ap}) for $A_p$ from Theorems~\ref{th_ir} with $\widetilde{M}$ instead
of all $M_j$   and $\underline{\delta}$ instead of all $\delta_j$  (in
this case $\mathfrak T$ could contain some additional $t_\mathbf{n}$
which might results only in increasing $A_p$).

From
 \[\left(\frac{e^{q \pi \delta_j/2}-1}{\delta_j^2}\right)'=\frac{ e^{q \pi\delta_j/2}(q\pi\delta_j-4)+4 }
              {2\delta_j^3}\]
follows that $(e^{q \pi \delta_j/2}-1)/\delta_j^{2}$ decreases when $\delta_j\in[0,\delta^*)$ and increases
for $\delta_j>\delta^*$, where $\delta^*:=\frac{2z^*}{q\pi}$,
   \[z^*:={\rm LambertW}(-2/e^2)+2\approx 1.59362.\]
To find the numerical value of $z^*$ we use the {\it Mathematica's} in--built numerical routine {\tt ProductLog[z]}
for the computation of ${\rm LambertW}$ function which is the inverse function of $ze^z$, \cite{Wei}. Namely,
we use that branch of ${\rm LambertW}$ which gives the nonzero solution of the equation $(2-z)e^{z}=2$.
Thus, $\delta^*=\frac{2({\rm LambertW}(-2/e^2)+2)}{q\pi}\approx
\frac{1.29174}{q}$. Therefore by (\ref{B_destimate}) we have the estimate
   \[{\mathfrak B_d}^{1/d} \le \max\left( \frac{8(e^{q \pi
      \underline{\delta}/2}-1)}{q \pi^2\underline{\delta}^2},\ \frac{8(e^{q \pi
      \overline{\delta}/2}-1)}{q \pi^2\overline{\delta}^2}\right).  \qquad \qquad \Box \]
\begin{remark}
For $q<\frac{2z^*}{\pi}$ it holds $\max\left( \frac{8(e^{q \pi
      \underline{\delta}/2}-1)}{q \pi^2\underline{\delta}^2},\ \frac{8(e^{q \pi
      \overline{\delta}/2}-1)}{q \pi^2\overline{\delta}^2}\right)=\frac{8(e^{q \pi
      \underline{\delta}/2}-1)}{q \pi^2\underline{\delta}^2}.$
\end{remark}
\begin{corollary} Let $f\in PW^q_{\pi,d},$ $q\ge 1,$ $d\in \mathbb N,$
   \begin{align}
      \widetilde{M} &\le \frac{1}{4}, \qquad \qquad \qquad \qquad \qquad \mbox{if}\ \ q=1, \ d=1; \nonumber\\
      \widetilde{M} &<  \min\left(\frac{1}{4q},\frac{1}{(4d-1)q}\right),\,\,\,\quad \mbox{else},\label{M1}
   \end{align}
and $\widetilde{N}\to +\infty$ in such way that ${\displaystyle\max_{k,j=\overline{1,d}}}\ {N_j}/{N_k}= O(1)$.
Then it holds
   \[ \| T_{\mathbf{N},d}(f;\mathbf x)\|_\infty \to 0. \]
\end{corollary}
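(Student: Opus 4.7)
The plan is to invoke Corollary~\ref{cor} directly: since $PW_{\pi,d}^q\subset B_{\pi,d}^q$ and the constraint in \eqref{M1} is at least as strong as the constraint \eqref{M} required by Theorem~\ref{th_ms}, Corollary~\ref{cor} is applicable and gives
\[
\|T_{\mathbf N,d}(f;\mathbf x)\|_\infty \le \widetilde K(\mathbf N,\widetilde M,\underline\delta,\overline\delta)\,\|f\|_q .
\]
So the task reduces to showing $\widetilde K\to 0$ as $\widetilde N\to\infty$. The $\max$-factor in \eqref{K} involves only $q,\underline\delta,\overline\delta$ (with $\underline\delta\ge 1-2\widetilde M>0$ by the separation computation inside the proof of Theorem~\ref{th_ms}) and is a finite constant independent of $\mathbf N$, so it suffices to prove that
\[
\mathcal S(\mathbf N):=\sum_{k=1}^d C_1(N_k,\widetilde M)\mathop{\prod_{j=1}^d}_{j\neq k}\bigl(C_1(N_j,\widetilde M)+C_3^p(N_j,\widetilde M)+C_4(N_j)\,C_2^p(N_j,\widetilde M,\underline\delta)\bigr)
\]
vanishes in the limit.

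Next I would carry out a Stirling-type asymptotic analysis of $C_1,C_2,C_3,C_4$ with $M=\widetilde M$ fixed and $N\to\infty$. Using $(1+M/N)^N\to e^M$, the ratios $(2N+1)^{2N+1}/(2N+1-M)^{2N+1-M}\sim(2N+1)^M e^M$ and $(N+M)^{2(N+M)}/N^{2N}\sim N^{2M}e^{2M}$ occur repeatedly and, after bookkeeping, yield
\[
C_4(N)=O(1),\quad C_3(N,\widetilde M)\sim\kappa_3(\widetilde M)\,N^{4\widetilde M},\quad C_2(N,\widetilde M,\underline\delta)\sim\kappa_2(\widetilde M)\,N^{4\widetilde M},
\]
\[
C_1(N,\widetilde M)\sim\kappa_1(\widetilde M)\,N^{p(3\widetilde M-1)+1},
\]
with continuous and finite prefactors $\kappa_i$ on $[0,1/4]$. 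Since $4\widetilde Mp-(3\widetilde Mp-p+1)=p(\widetilde M+1)-1>0$ whenever $p\ge 1$ and $\widetilde M>0$, the summand $C_3^p\sim N^{4\widetilde Mp}$ dominates $C_1\sim N^{3\widetilde Mp-p+1}$ inside each of the $d-1$ product factors.

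Under the hypothesis $\max_{k,j}N_j/N_k=O(1)$, every $N_j$ tends to infinity proportionally to $\widetilde N$, so each summand of $\mathcal S(\mathbf N)$ is of order
\[
\widetilde N^{p(3\widetilde M-1)+1}\cdot\widetilde N^{4\widetilde Mp(d-1)}=\widetilde N^{\,p\widetilde M(4d-1)-p+1} .
\]
The exponent is strictly negative exactly when $\widetilde M(4d-1)<(p-1)/p=1/q$, i.e.\ $\widetilde M<1/((4d-1)q)$, which is the second constraint in \eqref{M1}; the first constraint $\widetilde M\le 1/4$ (for $q=1,d=1$) or $\widetilde M<1/(4q)$ (otherwise) is the one that guarantees the validity of the underlying expansion via Theorem~\ref{th_ms}. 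Hence $\mathcal S(\mathbf N)\to 0$, $\widetilde K\to 0$, and $\|T_{\mathbf N,d}(f;\mathbf x)\|_\infty\to 0$.

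The hard part is the Stirling-type bookkeeping for $C_1$ and $C_3$: correctly extracting the polynomial orders from products such as $(2N+1)^{2N+1}/(2N+1-M)^{2N+1-M}$ and $(N-1-M)^{2(N-1-M)}/(N-1-2M)^{2(N-1-2M)}$, and then recognising that it is $C_3^p$, not $C_1$, which drives the asymptotics of each product factor. Once this is identified, the threshold $1/((4d-1)q)$ arises transparently by combining one exponent $3\widetilde Mp-p+1$ (from $C_1$) with $d-1$ copies of $4\widetilde Mp$ (from the dominant $C_3^p$ terms), while the growth hypothesis $N_j/N_k=O(1)$ is precisely what allows these $d$ separate exponents to be aggregated into a single power of $\widetilde N$.
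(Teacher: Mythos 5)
Your proposal is correct and follows essentially the same route as the paper: reduce to showing $\widetilde K\to 0$ via Corollary~\ref{cor}, extract the asymptotic orders $C_1\sim N^{3\widetilde Mp-p+1}$, $C_2,C_3\sim N^{4\widetilde M}$, $C_4=O(1)$, and combine one $C_1$ exponent with $d-1$ dominant $4\widetilde Mp$ exponents to obtain the threshold $\widetilde M<1/((4d-1)q)$, with the remaining constraint in \eqref{M1} inherited from Theorem~\ref{th_ms}. The paper's proof merely states these asymptotics without the bookkeeping you supply, so your version is the same argument in fuller detail.
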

\begin{proof} By definitions for constants $C_j,\ j=\overline{1,4}$ from Theorem~\ref{th_ir} we get
   \begin{align*}
      C_1(N,M) &\sim const\cdot N^{1+3Mp-p},\quad C_2(N,M,\delta)\sim const\cdot N^{4M},\\
      C_3(N,M) &\sim const\cdot N^{4M},\quad C_4(N)\sim const\,,
   \end{align*}
when $N\to+\infty.$

Therefore by (\ref{K}) $\widetilde{K}(\mathbf N, \widetilde{M},\widetilde{\delta})\to 0$ when $\widetilde{N}\to \infty$
if$\widetilde{M}<\frac{1}{(4d-1)q}.$ Application of Corollary~\ref{cor} and (\ref{M}) finishes proof and gives us
conditions (\ref{M1}).
\end{proof}

\section{Final remarks}
In most papers known in literature (see, for example, \cite{hig},
\cite{li}, \cite{LF}, and references in them) to obtain
Kotelnikov-Shannons theorems and upper bounds for approximation
errors in $L^p(\mathbb R)$ spaces various authors considered
particular classes of functions with prescribed decay conditions
which gave opportunity to estimated $B_q$ in (\ref{4}) in such
manner that the estimate vanishes when $N$ tends to infinity. The
another term $A_p$ in (\ref{4}) usually was estimated by some
constant which did not depend on $N.$

In this paper we propose estimates for $A_p$ which depends on $N$
and vanishes when $N$ tends to infinity. This approach gives
opportunity to consider and obtain approximation errors estimates
for wide functional classes without strong assumptions on decay
behaviour.  New upper error bounds in $\|\cdot\|_\infty$ norm in
sampling theorem were obtained. The case of multidimensional
irregular sampling was considered.

Results presented in the paper and numerical simulation done by the
authors (consult \cite{uni_1}) arise some new open problems:
\begin{enumerate}
\item To obtain sharp estimates in Theorem \ref{th1};
\item To obtain sharp estimates in Theorem \ref{th_ir} (for uniform sampling and $p=2$ such sharp
estimates were derived in \cite{OP2});
\item To apply obtained
results to stochastic case, see \cite{OP}.
\end{enumerate}
In the case (3) we point out two approaches. The first one concerns the so--called spectral representation
$X(t) = \int_\Lambda f(t,\lambda)Z(d\lambda)$ of given stochastic process
$X(t),\,  t\in T\subseteq \mathbb R$, with deterministic kernel function $f(t,\lambda)$ coming from
function space with respect to the first argument, e.g. Paley--Wiener class $PW_{\pi, d}^2$ \cite{OP1},
\cite{OP2}, \cite{Pog3}, Bernstein class $B_{\sigma, p}$ \cite{OP4}. Under suitable conditions the
kernel function possesses WKS sampling sum expansion in the related $L_p$--norm.  Again,
by the quoted spectral representation formula one deduces the related WKS sampling restoration sum for the initial
stochastic signal $X(t)$ in the mean--square and/or almost sure sense. Second, when $X(t)$ weak sense stationary,
it is enough to consider the WKS sampling
restoration sum for the covariance function $R_X(\tau)=\mathsf EX(\tau)X(0), \tau \in \mathbb R$ of the
considered stochastic signal. Then we deduce the final mean square WKS sampling restoration sum for $X(t)$.
In both cases the main
mathematical tool will be the celebrated Karhunen--Cram\'er theorem on integral
representation of stochastic processes \cite[pp. 144--179, p. 156]{Yag}  However, the structure
of considered stochastic signals should mainly influence the convergence conditions and rates of
derived WKS sampling sums.

\section*{Acknowledgement}The authors were supported by the La Trobe University Research
Grant-501821 "Sampling, wavelets and optimal stochastic modelling".


\begin{thebibliography}{00}
\bibitem{fei1} Aldroubi A. and Feichtinger H. Exact iterative reconstruction algorithm for
multivariate irregularly sampled functions in spline-like spaces: the $L^p$-theory,
 {\it Proc. Amer. Math. Soc.} {\bf 126,} no. {\bf 9}(1998),  2677--2686.

\bibitem{Boas0} Boas R.P. Jr.  Entire functions bounded on a line. {\it Duke Math. J.} {\bf 6}(1940),
148--169.

\bibitem{Boas} R.P. Jr. Boas, {\em Entire functions\/}. Academic Press INC, N.Y.,
1954.

\bibitem{fei2} Feichtinger H. and Gr\"{o}chenig K. Irregular sampling theorems and series expansions of
band-limited functions, {\it J. Math. Anal. Appl.} {\bf 167}(1992),  530--556.

\bibitem{fei3} Feichtinger H. and Gr\"{o}chenig K. Iterative reconstruction of multivariate band-limited
functions from irregular sampling values, {\it SIAM J. Math. Anal.} {\bf 23,} no. {\bf 1}(1992),  244--261.

\bibitem{FLS} Flornes K.M., Lyubarskii Yu. and Seip K. A direct interpolation method for irregular sampling,
{\it Appl. Comput. Harmon. Anal.} {\bf 8}(2000), no.1, 113--121.

\bibitem{gre}  Greengard L.,  Lee June-Yub and  Inati S. The fast $sinc$ transform
and image reconstruction from nonuniform samples in $k$-space, {\it Comm. App. Math. and Comp. Sci.}
{\bf 1,} no. {\bf 1}(2006), 121--131.

\bibitem{hig} Higgins J.R. {\it Sampling in Fourier and Signal Analysis: Foundations}. Clarendon Press, Oxford, 1996.

\bibitem{hig2} Higgins J.R., Schmeisser G. and Voss J.J. The sampling theorem and several equivalent results in analysis, {\it J. Comp. Anal. Appl.} {\bf 2}(2000), 333--371.

\bibitem{hin} Hinsen G. Irregular sampling of bandlimited $L^p$--functions. {\it J. Approximation Theory} {\bf 72}(1993), 346--364.

\bibitem{hin1} Hinsen G., Kl\"{o}sters D. The sampling series as a limiting case of Lagrange interpolation,
{\it Applicable Anal.} {\bf 49}(1993), 49--60.

\bibitem{li} Li Xin Min. Uniform bounds for sampling expansions. {\it J. Approximation Theory} {\bf 93}(1998), no.1,
100--113.

\bibitem{lin} Lindner A. M. On lower bounds of exponential frames. {\it J. Fourier Anal. Appl.}
{\bf 5}(1999), no.{\bf 2,} 187--194.

\bibitem{LF} Long Jingfan, Fang Gensun. On uniform truncation error upper bounds and aliasing error for
multi\-dimensional sampling expansion. {\it Sampl. Theory Signal Image Process.} {\bf 2}(2003), no.2, 103--115.

\bibitem{mar} Marvasti F., Ed. {\it Nonuniform Sampling: Theory and Practice.} Kluwer Academic Publishers, Boston, 2001.

\bibitem{nik} Nikolskii S.M. {\it Approximation of functions of several variables and imbedding theorems.}
Springer-Verlag, New York, 1975.

\bibitem{OP} Olenko A.Ya. and  Pog\'any T.K. Direct Lagrange -- Yen Type Interpolation of Random Fields.
{\it Theor. Stoch. Proc.} {\bf 9 (25)}, no.{\bf 3-4}(2003),
242--254.

\bibitem{OP1} Olenko A.Ya. and  Pog\'any T.K. Sharp exact upper bound for interpolation remainder of random processes.
{\it Theor. Probab. Math. Statist.} {\bf 71}(2004), 133--144.

\bibitem{OP2} Olenko A.Ya. and  Pog\'any T.K. On sharp exact bounds for remainders in multidimensional sampling theorem.
{\it Sampl. Theory Signal Image Process.} {\bf 6}, no. {\bf 3}(2007), 249--272.

\bibitem{OP4} Olenko A.Ya. and  Pog\'any T.K. Time shifted aliasing error upper bounds for truncated sampling cardinal series.
{\it J. Math. Anal. Appl.} {\bf 324}(2006), 262--280.

\bibitem{uni_1}Olenko A.Ya. and  Pog\'any T.K. Universal truncation error upper bounds in sampling restoration.
(submitted)

\bibitem{PP} Plancherel M. and P\'olya Gy. Fonctions enti\`eres et int\'egrales de Fourier multiples, {\it Comment.
Math. Helv.} {\bf 9}(1937), 224--248.

\bibitem{PPII} Plancherel M. and P\'olya Gy. Fonctions enti\`eres et int\'egrales de Fourier multiples II, {\it Comment.
Math. Helv.} {\bf 10}(1937), 110--163.

\bibitem{Pog2} Pog\'any T.K. Statistical estimation of the bandwidth from irregularly spaced data,
{\it Signal Processing} {\bf 54/1}(1996), 75--80.

\bibitem{Pog3} Pog\'any T.K. Multidimensional Lagrange - Yen interpolation via Kotel'ni\-kov - Shannon sampling
formulas. {\it Ukr. Math. J.} {\bf 50,} no. {\bf 11}(2003),  1810--1827.

\bibitem{Seip} Seip K. {\it Interpolation and Sampling in Spaces of Analytic Functions}.
American Mathematical Society, Providence, RI, 2004.

\bibitem{Wei} Weisstein E.W. Lambert W-Function.  {\it MathWorld -- A Wolfram Web Resource,}\\ {\tt http://mathworld.wolfram.com/LambertW-Function.html}

\bibitem{Yag} Yaglom, A.M. {\it Correlation Theory of Stationary and Related Random Functions II.
Supplementary Notes and References.} Springer--Verlag, New York $\cdot$ Berlin $\cdot$ Heidelberg $\cdot$ London
$\cdot$ Paris $\cdot$ Tokyo, 1986.

\bibitem{Yen} Yen J.L. On nonuniform sampling of bandwidth limited signals,
{\it IRE Trans. Circuit Theory} {\bf CT-3} (1956), 251--257.
\end{thebibliography}
\end{document}